\providecommand{\tabularnewline}{\\}
\theoremstyle{plain}
\newtheorem{thm}{\protect\theoremname}
\theoremstyle{plain}
\newtheorem{prop}[thm]{\protect\propositionname}
\newcolumntype{I}{!{\vrule width 2pt}}
\newcommand{\figref}{Figure }
\newcommand{\tabref}{Table }
\newcommand{\secref}{Section }
\newcommand{\propsname}[1]{Proposition #1}
\title{An Implementation of List Successive Cancellation Decoder with Large List Size for Polar Codes}
\author{ ChenYang Xia$^{\star}$, YouZhe Fan$^{\star}$, Ji Chen$^{\star}$, Chi-ying Tsui$^{\diamond}$, ChongYang Zeng$^{\dagger}$, Jie Jin$^{\dagger}$, and Bin Li$^{\dagger}$\\ $^{\star\diamond}$Department of Electronic and Computer Engineering, the HKUST, Hong Kong\\$^{\dagger}$Communications Technology Research Lab., Huawei Technologies, P. R. China\\$^{\star}$\{cxia, jasonfan, jchenbh\}@connect.ust.hk, $^{\diamond}$eestui@ust.hk \\$^{\dagger}$\{zengchongyang, steven.jinjie, binli.binli\}@huawei.com\\}
\providecommand{\propositionname}{Proposition}
\providecommand{\theoremname}{Theorem}
\begin{document}
\maketitle
\begin{abstract}
Polar codes are the first class of forward error correction (FEC)
codes with a provably capacity-achieving capability. Using list successive
cancellation decoding (LSCD) with a large list size, the error correction
performance of polar codes exceeds other well-known FEC codes. However,
the hardware complexity of LSCD rapidly increases with the list size,
which incurs high usage of the resources on the field programmable
gate array (FPGA) and significantly impedes the practical deployment
of polar codes. To alleviate the high complexity, in this paper, two
low-complexity decoding schemes and the corresponding architectures
for LSCD targeting FPGA implementation are proposed. The architecture
is implemented in an Altera Stratix V FPGA. Measurement results show
that, even with a list size of 32, the architecture is able to decode
a codeword of 4096-bit polar code within 150 $\mu$s, achieving a
throughput of 27Mbps.
\end{abstract}

\begin{IEEEkeywords}
polar codes, list successive cancellation decoding, FPGA implementation,
low-complexity design.
\end{IEEEkeywords}

\IEEEpeerreviewmaketitle

\section{Introduction\label{sec:introduction}}

As an emerging class of \emph{forward error correction} (FEC) codes
with a provably capacity-achieving capability, \emph{polar codes}
\cite{Arikan2009} attract a lot of research interests recently. To
decode the polar codes, \emph{list successive cancellation decoding}
(LSCD) \cite{Kai_List_2012,TIT_Tal_2015} was proposed, which outputs
$\mathcal{L}$ (called \emph{list size}) decoding paths by using
$\mathcal{L}$ parallel \emph{successive cancellation decodings}
(SCDs) \cite{TSP_Gross_2013,TSP_Fan_2014}. By concatenating the polar
codes with \emph{cyclic redundancy check} (CRC) codes \cite{TIT_Tal_2015,CL_Bin_2012}
and using the checksums to choose the most reliable path from the
list, LSCD with a large list size ($\mathcal{L}\geq16$) achieves
a similar or even better performance than other well-known FEC codes
\cite{TIT_Tal_2015}, such as low-density parity-check codes and turbo
codes. However, this comes at a high hardware cost as the complexity
scales with the list size $\mathcal{L}$. Thus, a low-complexity implementation
of the corresponding LSCD is very desirable.

The existing LSCD architectures \cite{Balatsoukas_TSP_2015,Bo_Yuan_TVLSI_2015},
which were designed for a small or medium list size ($\mathcal{L}\leq8$),
are not suitable for a large list size due to their high complexity
that is mainly due to two computational blocks. Firstly, several crossbars
are required for executing the \emph{list management} (LM) operation
\cite{Balatsoukas_TSP_2015} and they have complexity of $O(\mathcal{L}^{2})$.
Secondly, a sorter with $2\mathcal{L}$ inputs is needed to compare
and select the $\mathcal{L}$ best out of $2\mathcal{L}$ decoding
paths to keep the list size to $\mathcal{L}$ during the decoding
process. To reduce the logic delay, usually, a parallel sorter is
used \cite{Balatsoukas_TSP_2015}. However, this parallel sorter
has $O(\mathcal{L}^{2})$ comparators and hence dictates the clock
frequency and incurs high hardware complexity.

Recently, two \emph{field programmable gate array} (FPGA) implementations
of LSCD architectures were presented in \cite{crxiong_lehigh_sips_2016_fpgaemul,xliang_seu_globecomm_2016_ds},
which can be used as the emulation platforms for evaluating the performance
of polar codes. Due to the high complexity of LSCD, these platforms
cannot support an LSCD of $\mathcal{L}>4$. Moreover, to the best
of our knowledge, hardware implementation for LSCD with $\mathcal{L}=32$
has not been investigated in the literatures yet. In this work, we
first propose two low-complexity decoding schemes for the LSCD with
a large list size based on the analysis of the design constraints.
Then, an LSCD architecture using these schemes is developed and implemented
in an Altera FPGA. Measurement results show that our LSCD of $\mathcal{L}=32$
decodes a 4096-bit polar code within 150 $\mu$s to achieve a 27Mbps
throughput.

\section{Preliminaries\label{sec:review} }

\subsection{Code Construction\label{subsec:code-constuction}}

Considering a polar code with length $N=2^{n}$. Its generator matrix,
$\textbf{F}^{\otimes n}$, is the $n^{th}$ Kronecker power of $\textbf{F}=\begin{bmatrix}1 & 0\\
1 & 1
\end{bmatrix}$. Source word $\textbf{u}$ and code word $\textbf{x}$ are two $N$-bit
binary vectors related by $\textbf{x}=\textbf{u}\cdot\textbf{F}^{\otimes n}$.
The bits in $\textbf{u}$ have different reliabilities. The indices
of the $K$ most reliable bits compose the information set $\mathcal{A}$
while its complement $\mathcal{A}^{c}$ is called the frozen set.
Accordingly, $u_{i}\text{s}\,(i\in\mathcal{A})$ are called information
bits and are used to deliver message; while the rest are called frozen
bits and fixed to 0. The code rate is thus defined as $R=K/N$. When
an $r$-bit CRC code is concatenated, the last $r$ information bits
are used to deliver the CRC checksums of the other $K-r$ information
bits.

\subsection{List Successive Cancellation Decoding\label{subsec:review_lscd}}

\begin{figure}
\centering\subfloat[]{\includegraphics{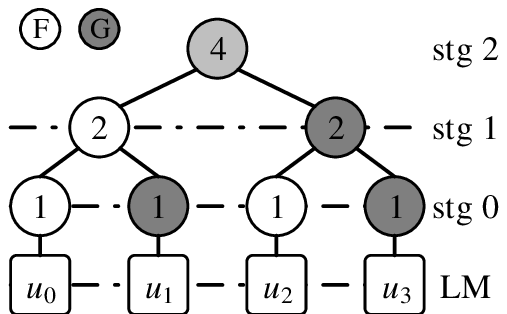}

}\subfloat[]{\includegraphics{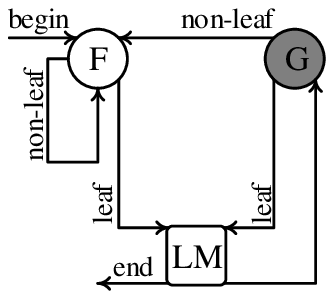}

}\caption{(a) scheduling tree of polar codes of $N=4$ and (b) the corresponding
state transfer diagram.}
\label{fig:codec}
\end{figure}

As shown in \figref \ref{fig:codec}(a), the LSCD is made up of $\mathcal{L}$
copies of SCD operations (each described by the full binary tree)
and the LM operations (represented by the squares).

The SCD operation is a depth-first traversal of the full binary tree
with $n+1$ stages which is also called a scheduling tree. The channel
\emph{log-likelihood ratios} (LLRs), $L_{i}=\textup{log}(\textup{Pr}(\textbf{y}|0))-\textup{log}(\textup{Pr}(\textbf{y}|1)),i\in[0,N-1]$,
are the inputs at the root node of the scheduling tree, where \textbf{$\textbf{y}$}
is the channel output on \textbf{$\textbf{x}$}. The left and right
children of a node are called F- and G-node whose functions are\footnote{\label{note1}The exact forms of these functions are non-linear. To
have an efficient hardware implementation, approximate forms, \eqref{eqn:f_func}
and \eqref{eqn:pmu_update}, are used\cite{TSP_Gross_2013,Balatsoukas_TSP_2015}.}
\begin{align}
L_{F}(L_{a},L_{b}) & =(\textup{sgn}(L_{a})\oplus\textup{sgn}(L_{b}))\cdot\textup{min}(|L_{a}|,|L_{b}|),\label{eqn:f_func}\\
L_{G}(\hat{s},L_{a},L_{b}) & =(-1)^{\hat{s}}L_{a}+L_{b},\label{eqn:g_func}
\end{align}
respectively, where $L_{a}$ and $L_{b}$ are the inputs of the both
functions from the previous stage. $\hat{s}$ in \eqref{eqn:g_func}
is a binary input called partial-sum, which is calculated from the
bits already decoded up to the corresponding G-node. In the LSCD,
all the $\mathcal{L}$ copies of SCDs are executed in parallel.

An LM is executed after a leaf node is reached by the SCDs. Assuming
that after $\hat{u}_{i-1}$ is decoded, the list is full of $\mathcal{L}$
paths and each path has a different decoded sub-vector $[\hat{u}_{0},...,\hat{u}_{i-1}]\in\{0,1\}^{i}$.
A \emph{path metric} (PM), $\gamma_{i-1}^{l}$, is associated with
each path to represent its reliability. When $\hat{u}_{i}$ is decoded,
the LM of $\hat{u}_{i}$ is executed in two steps. First, each path
is expanded into two with $\hat{u}_{i}$ instantiated to 0 and 1,
respectively. For a path $l$, its \emph{path metric update} (PMU)
is\footnotemark[1]
\begin{equation}
\begin{cases}
\begin{array}{l}
\gamma_{i}^{2l}=\gamma_{i-1}^{l},\\
\gamma_{i}^{2l+1}=\gamma_{i-1}^{l}+\left|\Lambda_{i}^{l}\right|,
\end{array} & \begin{array}{l}
\textrm{if}~\hat{u}_{i}=\Theta\left(\Lambda_{i}^{l}\right),\\
\textrm{if}~\hat{u}_{i}=1-\Theta\left(\Lambda_{i}^{l}\right),
\end{array}\end{cases}\label{eqn:pmu_update}
\end{equation}
where $\gamma_{i}^{2l}$ and $\gamma_{i}^{2l+1}$ are the PMs of the
two expanded paths and $\Lambda_{i}^{l}$ is the output LLR of the
$i^{th}$ leaf node. The hard decision is made by $\Theta\left(x\right)$=$(x<0)$.
If the number of paths exceeds $\mathcal{L}$ after the path expansion,
the \emph{list pruning operation} (LPO) is executed to find the $\mathcal{L}$
smallest PMs and keep them as the survival paths. Note that if $i\in\mathcal{A}^{c}$,
only one of the equations is executed and the LPO is not needed.

\subsection{Problems in the Existing LSCD Architectures\label{subsec:soa_lscd} }

\begin{table}
\centering

\begin{threeparttable}

\caption{Crossbar complexity for 4096-bit polar codes on Altera 5SGXEA7N2F45C2
(available ALMs: 234,720) }

\begin{tabular}{c||c|c|c|c|c}
\hline 
List size & 2 & 4 & 8 & 16 & 32\tabularnewline
\hline 
Req. ALMs & 10,240 & 15,360 & 87,040 & 414,720 & 1,479,680\tabularnewline
\hline 
\end{tabular}

\label{tab:complexity_crb}

\end{threeparttable}
\end{table}

Based on the algorithms presented above, several LSCD architectures
were proposed \cite{Balatsoukas_TSP_2015,Bo_Yuan_TVLSI_2015,crxiong_lehigh_sips_2016_fpgaemul,xliang_seu_globecomm_2016_ds,JSAC_FAN_2015}.
One common feature of them is that some $\mathcal{L}\times\mathcal{L}$
crossbars are needed to align the data in the $\mathcal{L}$ blocks
of SCD hardware according to the LM results. \tabref\ref{tab:complexity_crb}
shows the synthesis results of the crossbars used in the architecture
of \cite{JSAC_FAN_2015}. Here, an 8-bit quantization is used for
the LLRs. It can be seen the complexity scales far beyond $O(\mathcal{L})$
for a given polar code and the required resources far exceed the logic
resources, i.e. \emph{adaptive logic modules} (ALMs), available in
the FPGA for the LSCD with large list sizes. 

Another issue is the implementation of the sorter which is required
for finding the smallest $\mathcal{L}$ PMs after each path expansion.
According to \cite{Balatsoukas_TSP_2015}, the delay and complexity
of radix-2$\mathcal{L}$ sorter are significantly increased with $\mathcal{L}$.
The complexity of this sorter is further increased if a low-latency
LM scheme, such as \emph{multi-bit decoding} (MBD) \cite{Bo_Yuan_TVLSI_2015},
is used.

From the above discussion, implementing the architecture of LSCD with
a large list size on hardware, especially in a resource-limited device
such as an FPGA, is a very challenging task. In the following sections,
we will present some schemes to reduce the complexity of LSCD. 

\section{Low-complexity decoding schemes for LSCD}

\subsection{Parallel-F Serial-G Computation\label{subsec:pfsg} }

\begin{figure}[t]
\centering 

\includegraphics[width=88mm]{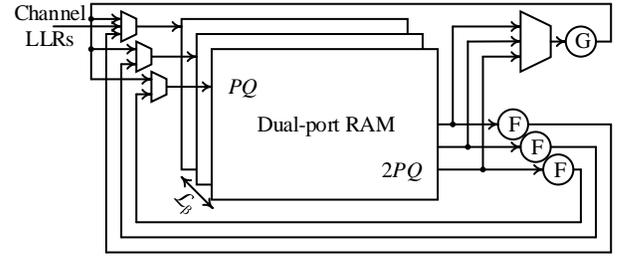} \caption{The structure of parallel-F serial-G computation.}
\label{fig:arc_pfsg} 
\end{figure}

From \secref \ref{subsec:soa_lscd}, it is beneficial to avoid using
crossbars in the architecture of LSCD with a large list size. A straightforward
method is to integrate $\mathcal{L}$ blocks of LLR memories into
a single memory and evaluating the SCD functions of each path serially.
By doing so, the required operands are obtained by accessing the memory
in the right locations. However, since the $\mathcal{L}$ SCDs are
executed serially, the decoding latency is $\mathcal{L}$ times that
of the traditional SCD. To reduce this latency, the following proposition
related to the LSCD is used.
\begin{prop}
\label{prop:crossbar} When the F-nodes are computed, the memories
and PE arrays are one to one corresponding and the crossbars do not
need to permute any data; only when the G-nodes are visited, crossbars
need to permute the data from the memories. 
\end{prop}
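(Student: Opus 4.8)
The plan is to analyze the depth-first schedule of the SCD tree traversal and, for every F- and G-node, pinpoint when the parent LLRs it consumes were last written and whether any list-management (LM) operation---the only thing that can reassign path indices---has intervened. Fix the root at stage $n$ and the leaves at stage $0$, so that each node obtains its LLRs from its parent one stage higher, via \eqref{eqn:f_func} at an F-node and \eqref{eqn:g_func} at a G-node. The depth-first rule computes, for every internal node $p$, the F-child of $p$ first, then recurses through $p$'s entire left subtree (decoding all of its leaves and running the associated LMs), then computes the G-child of $p$ from the partial sum returned by that subtree, and finally recurses into the right subtree. The LLRs of an internal node $p$ at stage $s+1$ are written when $p$ is first entered and are not overwritten while its descendant stages $\le s$ are traversed; this persistence is the structural fact everything rests on. Whenever no LM has intervened between writing and reading a node's parent LLRs, the $\mathcal{L}$ memory blocks still line up one-to-one with the $\mathcal{L}$ PE arrays and the crossbar is a pass-through, whereas an intervening LM forces a permutation.

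For the F-nodes, every F-child is computed in the step immediately following the computation of its parent (or, at the top of the tree, directly from the fixed channel LLRs); no leaf is decoded and hence no LM runs between writing the parent LLRs and reading them for \eqref{eqn:f_func}. By induction each node's output, once computed, is already in the current path order, so the cumulative index permutation seen by an F-child is the identity, the memories and PE arrays stay in one-to-one correspondence, and the crossbar moves no data---the first half of the proposition.

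For the G-nodes, a G-child of $p$ is reached only after $p$'s whole left subtree has been traversed, a traversal that contains at least one leaf and, in general, one or more LMs that reorder the surviving paths. The parent LLRs, stored under the index assignment in force when $p$ was entered, then no longer match the current assignment and must be fetched through the composition of the intervening LM reorderings---exactly the crossbar permutation. The partial sum fed to \eqref{eqn:g_func} is produced by the just-finished left subtree and is already in current order, so only the LLR read is permuted; this is why the crossbar is exercised precisely at, and only at, G-node steps.

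The main obstacle I expect is the bookkeeping in the G-node case: showing that the permutation to apply is well defined as the ordered composition of the LM reorderings met along the left subtree, and that the parent LLRs are genuinely preserved across that traversal, so that a single lazy permutation at G-time is legitimate rather than requiring eager copying at each LM. I would also be careful to phrase the F-node claim as ``no permutation is ever required'' against the G-node claim as ``a permutation may be required,'' since an all-frozen left subtree triggers no LPO and hence no reordering; the architectural content is simply that a crossbar permutation can be needed only at a G-node visit.
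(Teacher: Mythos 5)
Your proposal is correct and follows essentially the same route as the paper: the paper's proof is a one-line appeal to the state transfer diagram of Figure \ref{fig:codec}(b), i.e.\ to the execution order of F-nodes, G-nodes and LMs, and your argument is precisely that ordering analysis written out in full (F-children are read immediately after their parent with no intervening LM, while G-children are read only after the left subtree's LMs have reordered the paths). The extra care you take about all-frozen left subtrees and the lazy composition of LM permutations is a sound refinement of, not a departure from, the paper's reasoning.
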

\begin{proof} This can be easily proved from the state transfer diagram,
as shown in \figref \ref{fig:codec}(b), which shows the execution
order of the F-nodes, G-nodes and LMs.\end{proof}

Based on \propsname{\ref{prop:crossbar}}, a \emph{parallel-F serial-G}
(PFSG) computation scheme is proposed. All the F-functions are calculated
in parallel for all the paths as the crossbar is not needed in this
situation and a direct connection between the corresponding memory
and PE array can already support the calculations. In contrast, the
G-functions of each path are serially evaluated to avoid using crossbars.
As the latency for evaluating these two kinds of functions are the
same in the SCD operation, the latency of LSCD using PFSG computation
is $\frac{\mathcal{L}+1}{2}$ times that of the traditional SCD, which
is reduced by almost one half comparing with that of straightforward
mapping for large $\mathcal{L}$.

The corresponding PFSG structure is shown in \figref \ref{fig:arc_pfsg}.
Each block of RAM is implemented with a dual-port RAM with a $2PQ$-bit
read port and a $PQ$-bit write port, where $Q$ is the number of
quantization bits for the LLRs. $\mathcal{L}+1$ groups of $P$ processing
elements are used in this structure. One group is for the G-nodes,
whose inputs are selected by an $\mathcal{L}$-to-1 multiplexer. The
others are for the F-nodes, which can calculate the F-functions for
$\mathcal{L}$ paths simultaneously.

It is noted when $\mathcal{L}$ is large, the utilization of RAMs
is temporarily low as only the data from one of the $\mathcal{L}$
blocks of RAMs are valid in each cycle. So, in the real implementation,
the number of blocks of RAMs can be reduced from $\mathcal{L}$ to
$\mathcal{L}_{\beta}$, which is a power of 2, and each block of RAM
stores the LLRs of $\mathcal{L}/\mathcal{L}_{\beta}$ paths. By choosing
a proper $\mathcal{L}_{\beta}$, the balance between the complexity
and the latency can be achieved.

\subsection{Low-Complexity List Management\label{subsec:lblm}}

In this section, a simplified LM operation of LSCD is proposed to
reduce the computational complexity. To avoid the long latency brought
by the G-nodes in the PFSG computation, the proposed method is based
on the MBD. Specifically, the MBD simultaneously decodes all the $M$
bits of a sub-tree rooted at stage $m$, where $M=2^{m}$. Let $\gamma_{\text{in}}$
be the PM of one survival path and $\gamma_{\text{out}}^{\text{MBD}}$
be the PM of one of its expanded paths, then the PMU of MBD is
\begin{equation}
\gamma_{\text{out}}^{\text{MBD}}=\gamma_{\text{in}}+{\displaystyle \sum}_{i=0}^{M-1}(v_{i}\oplus\Theta(L_{i}))\cdot|L_{i}|,\label{eqn:mbd}
\end{equation}
where $[L_{0},...,L_{M-1}]$ are the output LLRs at the root node
of the sub-tree and $[v_{0},...,v_{M-1}]=[\hat{u}_{0},...,\hat{u}_{M-1}]\cdot F^{\otimes m}$.
There are at most $2^{M}$ combinations of $v_{i}$s and hence at
most $2^{M}$ paths are expanded from each survival path, which incurs
a high complexity to the LPO even when $M$ is small. 

To reduce the complexity, we combine one of our previously proposed
algorithms, \emph{selective expansion} (SE) \cite{JSAC_FAN_2015},
with the MBD. The SE efficiently reduces the number of the expanded
paths by partitioning the information set $\mathcal{A}$ into an unreliable
set $\mathcal{A}_{u}$ and a reliable set $\mathcal{A}_{r}$ based
on the reliability of each information bit. The path expansions corresponding
to the bits belonging to $\mathcal{A}_{r}$ do not need to be executed.
We call the combined method \emph{low-complexity list management}
(LCLM). Supposing there are $M_{u}$ unreliable bits and $M_{r}$
reliable bits in a $M$-bit sub-tree. For a given set of values of
the unreliable bits, the PMU of one of the expanded paths is calculated
as
\begin{equation}
\gamma_{\text{out}}^{\text{LCLM}}=\text{min}_{u_{j}\in\{0,1\},\,j\in\mathcal{A}_{r}}(\gamma_{\text{out}}^{\text{MBD}}),\label{eqn:lclm_pmu}
\end{equation}
where $\gamma_{\text{out}}^{\text{MBD}}$s are obtained from \eqref{eqn:mbd}.
The minimum in \eqref{eqn:lclm_pmu} is selected over the $2^{M_{r}}$
$\gamma_{\text{out}}^{\text{MBD}}$s. To expand each survival path,
\eqref{eqn:lclm_pmu} needs to be calculated $2^{M_{u}}$ times as
$2^{M_{u}}$ paths will be generated from the path expansion. Finally,
LPO is used to select the $\mathcal{L}$ best paths from the $2^{M_{u}}\cdot\mathcal{L}$
expanded paths.

The LCLM expands fewer paths and hence achieves a lower complexity
than the MBD. Also, \propsname{\ref{props:lblm}} guarantees the
decoding performance of LCLM is not worse than that of SE. 
\begin{prop}
\label{props:lblm} For a given $\gamma_{\text{in}}$ and $u_{i}\text{s}\,(i\in\mathcal{A}_{u})$
in an $M$-bit tree, the updated PMs of LCLM and SE satisfy $\gamma_{\text{out}}^{\text{LCLM}}\leq\gamma_{\text{out}}^{\text{SE}}$. 
\end{prop}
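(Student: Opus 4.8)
The plan is to derive the inequality directly from the observation that, once $\gamma_{\text{in}}$ and the unreliable bits $u_i$ ($i\in\mathcal{A}_u$) are fixed, both schemes evaluate the same MBD metric of \eqref{eqn:mbd}, and the only difference lies in how the reliable bits $u_j$ ($j\in\mathcal{A}_r$) are treated: LCLM minimizes over all of their combinations, whereas SE commits to a single combination. Since a minimum over a set never exceeds any particular element of that set, the result should follow once I show that the combination chosen by SE is one of the combinations ranged over in \eqref{eqn:lclm_pmu}.

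First I would fix the shared quantities. The frozen bits are clamped to $0$ and the unreliable bits are held at the given values in both schemes, so they contribute identically to \eqref{eqn:mbd}; the PM therefore becomes a function solely of the reliable-bit assignment $\{u_j : j\in\mathcal{A}_r\}$ through the codeword bits $v_i=[\hat{u}_0,\dots,\hat{u}_{M-1}]\cdot F^{\otimes m}$. Next I would characterize SE: because SE does not expand the bits of $\mathcal{A}_r$, it assigns each such bit a single fixed value (its hard decision), producing one fixed assignment $\{u_j^{\mathrm{SE}}\}$. Writing $\gamma_{\text{out}}^{\text{SE}}$ as $\gamma_{\text{out}}^{\text{MBD}}$ evaluated at $\{u_j^{\mathrm{SE}}\}$, I then recall that \eqref{eqn:lclm_pmu} defines $\gamma_{\text{out}}^{\text{LCLM}}$ as the minimum of $\gamma_{\text{out}}^{\text{MBD}}$ over all $2^{M_r}$ assignments $u_j\in\{0,1\}$, $j\in\mathcal{A}_r$. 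Since $\{u_j^{\mathrm{SE}}\}$ is one of those $2^{M_r}$ assignments, the minimum can only be smaller or equal, giving $\gamma_{\text{out}}^{\text{LCLM}}\le\gamma_{\text{out}}^{\text{SE}}$.

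The delicate step is the second one: pinning down exactly what SE computes and confirming that its reliable-bit assignment is a feasible point of the minimization in \eqref{eqn:lclm_pmu}. The subtlety is that the penalty in \eqref{eqn:mbd} is accumulated at the codeword level through the $v_i$, so a single reliable information bit can flip several $v_i$ simultaneously via $F^{\otimes m}$; I must therefore argue at the level of information-bit assignments rather than individual codeword penalties. The key point to verify, drawing on the SE construction of \cite{JSAC_FAN_2015}, is that SE and LCLM share the same linear map $F^{\otimes m}$ and the same fixed unreliable and frozen bits, so their feasible sets of reliable-bit assignments coincide; once this is established, SE's assignment is automatically admissible in \eqref{eqn:lclm_pmu} and the inequality is immediate.
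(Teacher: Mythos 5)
Your proof is correct and follows essentially the same route as the paper's: both argue that $\gamma_{\text{out}}^{\text{SE}}$ coincides with one of the $\gamma_{\text{out}}^{\text{MBD}}$ values appearing in the minimization of \eqref{eqn:lclm_pmu}, so the minimum cannot exceed it. Your additional care about the linear map $F^{\otimes m}$ and the feasibility of SE's reliable-bit assignment is a sound elaboration of the one-line argument the paper gives, not a different approach.
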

\begin{proof} For the given $u_{i}\text{s}\,(i\in\mathcal{A}_{u})$,
the corresponding $\gamma_{\text{out}}^{\text{SE}}$ equals to one
of the $\gamma_{\text{out}}^{\text{MBD}}$s calculated by \eqref{eqn:mbd}.
So \eqref{eqn:lclm_pmu} ensures the validity of \propsname{\ref{props:lblm}}.
\end{proof}

An LSCD tries to find the best $\mathcal{L}$ paths with the locally
smallest PMs. \propsname{\ref{props:lblm}} ensures that the paths
generated by the LCLM is not worse than those by the SE. Hence, the
error performance of LCLM is at least as good as that of SE.

\section{Implementation Results\label{sec:impl} }

\subsection{The Implementation of the Proposed LSCD Architecture\label{sec:arc}}

\begin{figure}[t]
\centering \includegraphics[width=88mm]{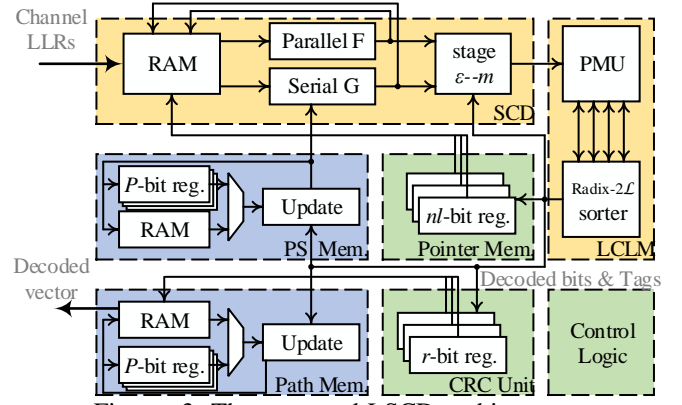} \caption{The proposed LSCD architecture.}
\label{fig:arc_proposed} 
\end{figure}

The implementation of the proposed LSCD architecture is shown in \figref
\ref{fig:arc_proposed}, which mainly includes seven blocks.

The SCD module is used to compute the F- and G-nodes to obtain the
LLR outputs of the stages higher than stage $m-1$. We further divide
these stages into high stages (higher than a pre-determined stage
$\epsilon$) and low stages (the rest). The high stages are calculated
with the PFSG structure. The low stages are calculated in a parallel
fashion as the PFSG brings a large latency overhead for these stages.
Specifically, one memory is used to store the LLRs of all the paths
and only one SCD hardware for the low stages is connected with it.
Such structure is duplicated $\mathcal{L}$ times and the computations
of all the paths can be executed simultaneously without a crossbar.
The LCLM module receives the LLR outputs at stage $m$ from the SCD
module. Here, a radix-$2\mathcal{L}$ parallel sorter is used. If
$M_{u}>1$ in a sub-tree, the $2\mathcal{L}$-to-$\mathcal{L}$ sorting
is executed multiple times in serial to find the best $\mathcal{L}$
paths. The LCLM greatly reduces the number of expanded paths, so the
latency for sorting is moderate. The outputs of the LCLM module include,
for each path, $M$ decoded bits and a tag, indicating which survival
path the expanded path is extended from.

The partial-sum memory and the path memory are used to store and update
the partial-sums and the decoded vectors of the $\mathcal{L}$ paths,
respectively. These memories are only activated when a G-node is calculated.
Therefore the crossbars originally required in these two blocks in
the existing architectures are not needed as the PFSG computation
is used. A two-staged memory structure similar to the folded partial-sum
network in \cite{TSP_Fan_2014} is used. The other parts, including
the pointer memory, the CRC unit and the control logic, are similar
to their counterparts in the existing architectures \cite{Balatsoukas_TSP_2015,JSAC_FAN_2015}.

\subsection{Implementation and Measurement Results in the FPGA}

\begin{table}
\centering

\begin{threeparttable}

\caption{The LSCD parameters for FPGA implementation.}
{\footnotesize{}\label{tab:par_code-1}}\setlength{\tabcolsep}{8pt}%
\begin{tabular}{c|c|>{\centering}p{0.6cm}|c|c|c|>{\centering}p{0.6cm}}
\hline 
$N$  & $K$  & $r^{\text{a}}$ & \multicolumn{3}{c|}{CRC generator polynomial} & $\mathcal{L}$\tabularnewline
\hline 
4096  & 2048  & 24  & \multicolumn{3}{c|}{0x864cfb } & 32\tabularnewline
\hline 
\hline 
$\mathcal{L}_{\beta}$ & $P$ & $Q$ & $Q_{\text{PM}}$ & $\eta$@SNR=2dB\tnote{b} & $m$ & $\epsilon$\tabularnewline
\hline 
4 & 128 & 8 & 9 & 0.3 & 2 & 3\tabularnewline
\hline 
\end{tabular} 

{\footnotesize{} \begin{tablenotes} \item [a] The effective code
rate is $R=\frac{K-r}{N}=0.494$. \end{tablenotes}}{\footnotesize \par}

{\footnotesize{} \begin{tablenotes} \item [b] Following the method
and notation of \cite{JSAC_FAN_2015}, $\eta$ determines $\mathcal{A}_{u}$
for SE. \end{tablenotes}}{\footnotesize \par}

\label{tab:par_code}

\end{threeparttable}
\end{table}
\begin{table}[t]
\centering{\footnotesize{} \caption{Hardware usage of the LSCD architecture in FPGA.}
 }%
\begin{tabular}{c|c|c|c}
\hline 
\multirow{1}{*}{} & \multirow{1}{*}{ALMs } & \multirow{1}{*}{Registers } & RAM blocks\tabularnewline
\hline 
LSCD usage  & 67,211  & 31,247  & 1,122\tabularnewline
\hline 
FPGA capacity  & 234,720  & 939,000  & 2,560\tabularnewline
\hline 
Utilization  & 28.63\%  & 3.33\%  & 43.82\% \tabularnewline
\hline 
\end{tabular}{\footnotesize{}\label{tab:hw_usage}}
\end{table}
\begin{table}[t]
\centering

\begin{threeparttable}{\footnotesize{}\caption{Comparison of the implementation results of several FPGA-based LSCD
architectures.}
\setlength{\tabcolsep}{10pt}}%
\begin{tabular}{c|>{\centering}p{1.6cm}|>{\centering}p{1.5cm}|>{\centering}p{1.5cm}}
\hline 
 & Proposed & \cite{crxiong_lehigh_sips_2016_fpgaemul} & \cite{xliang_seu_globecomm_2016_ds}\tabularnewline
\hline 
\multirow{2}{*}{FPGA Device\tnote{c}} & Altera & Xilinx & Altera\tabularnewline
 & Stratix V & Kintex 7 & Stratix V\tabularnewline
\hline 
(N,$\mathcal{L}$) & (4096,32) & (1024,4) & (1024,4)\tabularnewline
\hline 
ALMs(A)/LUTs(X)\tnote{d} & 67,211 & 142,961 & 101,160\tabularnewline
\hline 
Registers & 31,247 & 19,795 & 13,544\tabularnewline
\hline 
RAM (Mbits) & 22.440 & 4.404 & 0\tabularnewline
\hline 
Clock rates (MHz) & 107 & 42.66 & N/A\tabularnewline
\hline 
Throughput (Mbps) & 27.35 & 115 & N/A\tabularnewline
\hline 
\end{tabular}{\footnotesize \par}

{\footnotesize{} \begin{tablenotes} \item [c] All the FPGAs are manufactured
on 28nm process technology. \end{tablenotes}}{\footnotesize \par}

{\footnotesize{} \begin{tablenotes} \item [d] An ALM on Altera FPGA
can be used as a 6-input LUT. \end{tablenotes}}{\footnotesize \par}

{\footnotesize{}\label{tab:timing_cmp}}{\footnotesize \par}

\end{threeparttable}
\end{table}
\begin{figure}[t]
\vspace{-10pt}\includegraphics[width=8.8cm]{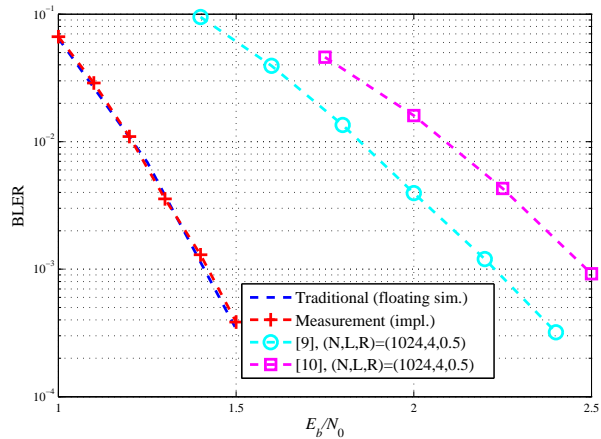}
\caption{The BLERs of different LSCDs.}
\label{fig:hardware_sim} 
\end{figure}

To demonstrate the performance of the FPGA implementation of the above
LSCD architecture, we implement it in an Altera Stratix V 5SGXEA7N2F45C2
FPGA. \tabref \ref{tab:par_code} summarizes the parameters of the
target polar codes and the implemented decoder. The LSCD architecture
is mapped on the FPGA with a clock frequency of 107MHz. The decoding
latency of the LSCD is 16019 cycles for one codeword, translating
into 149.71 $\mu$s under the target clock frequency. The hardware
usage of our LSCD under the specified constraint is shown in \tabref
III. Among all the resources, the RAM blocks (each with 20 kbits)
have the highest usage, 22.44 Mbits, which is much higher than the
theoretical value of about 2 Mbits. This is because the port width
of one RAM block is limited. To guarantee the calculation parallelism,
relatively wide port widths are used and multiple RAM blocks are then
needed, leading to the high usage of RAM blocks.

\tabref IV compares our LSCD with other FPGA-based LSCD architectures
in the literatures \cite{crxiong_lehigh_sips_2016_fpgaemul,xliang_seu_globecomm_2016_ds}.
Our architecture can support a longer code length and a much larger
list size with even lower utilization of logic resources. The memory
resources used per path are less than those of \cite{crxiong_lehigh_sips_2016_fpgaemul}.
Though the memory usage of the architecture in \cite{xliang_seu_globecomm_2016_ds}
is lower, without any reported timing results, it is not easy to determine
which architecture makes a better tradeoff between the complexity
and the latency. The comparison results indicate the proposed low-complexity
schemes are very efficient. At the same time, though the latency of
our LSCD is supposed to scale linearly with the list size, the throughput
degradation is less than linear. Also, for the other two architectures,
it is not feasible to use them to implement LSCD with a large list
size in an FPGA.

Finally, the measured \emph{block error rate} (BLER) of the implemented
LSCD is shown in \figref \ref{fig:hardware_sim}. For this measurement,
an encoder and an additive white Gaussian noise channel are also implemented
on-chip. As reference, the simulated BLER of the traditional LSCD
with floating-point is also shown. It can be seen that our LSCD functions
well and the performance degradation is less than 0.05dB at a BLER
of $10^{-3}$, which is the target BLER of a typical cellular communication
system. Also, comparing with the testing results presented in \cite{crxiong_lehigh_sips_2016_fpgaemul},
a performance gain of about 0.8dB is achieved at this BLER. 

\section{Conclusion\label{sec:conclusion} }

In this work, two low-complexity decoding schemes, namely PFSG and
LCLM schemes, are proposed for the LSCD with a large list size, and
the corresponding architecture for FPGA is developed and implemented.
The measurement results show that the proposed LSCD ($\mathcal{L}$=32)
has low hardware usage with negligible error performance degradation. 


\end{document}